\newtheorem{theorem}{Theorem}
\theoremstyle{plain}
\newtheorem{definition}{Definition}
\newtheorem{lemma}{Lemma}
\begin{document}
\title[Conformable Fractional Dirac Type Integro Differential System]{%
Inverse Problems for One Dimentional Conformable Fractional Dirac Type
Integro Differential System}
\subjclass[2000]{ 26A33, 34A55, 34L05,34L20, 34K29, 34K10, 47G20}
\keywords{Conformable Fractional Dirac System, intego-differential
operators, inverse problems.}

\begin{abstract}
In this paper, one dimentional conformable fractional Dirac-type integro
differential system is considered. The asymptotic formulae for the
solutions, eigenvalues and nodal points are obtained. We investigate the
inverse nodal problem and give an effective procedure for solving the
inverse nodal problem with respect to given a dense subset of nodal points .
\end{abstract}

\author{Baki Keskin}
\maketitle

\section{\textbf{Introduction}}

The Dirac operator is the relativistic Schr\"{o}dinger operator in quantum
physics.The basic and comprehensive results about Dirac operators were given
in \cite{Lev}. Inverse problems for the Dirac operators have been
extensively well studied in various publications (see \cite{Alb}, \cite{Gsy1}%
, \cite{Gus}, \cite{Hor}, \cite{Ksk2}, \cite{Ksk} , \cite{wang2} and the
references therein).The subject of fractional calculus has acquired
significant popularity and major attention from several authors in various
science due mainly to its direct involvement in the problems of differential
equations in mathematics, physics, engineering and others. This topic is
initiated by (\cite{Abdel}, \cite{halil}). The fractional calculus has
gained an interesting area in mathematical research and it has attracted the
attention of several scholars. A\ variety of new results can be found (\cite%
{abu}, \cite{abu1} and references therein). In recent years, scholars have
focussed on a fractional generalization of the well known Sturm-Liouville
and Dirac problems (\cite{al}, \cite{al1}, \cite{tuna}, \cite{tuna2}, \cite%
{tuna3}, \cite{gul}, \cite{kh}, \cite{kl} and \cite{rv}).

Inverse nodal problem was started for the Sturm--Liouville operator by
McLaughlin \cite{mc1} in 1988. In 1989, Hald and McLaughlin proved that it
is sufficient to know the nodal points to determine the potential function
of the regular Sturm--Liouville problem\ with more general boundary
conditions uniquely and gave some numerical schemes for the reconstruction
of the potential from nodal points \cite{H}. Yang proposed an algorithm to
solve an inverse nodal problem for the Sturm--Liouville operator in 1997 
\cite{yang}. Such problems have been considered by several researchers in (%
\cite{Br2}, \cite{ch},, \cite{guo1}, \cite{Law}, \cite{Ozkan}, \cite{wang}, 
\cite{wang3}, \cite{wei}, \cite{Yur}, \cite{yng1}, \cite{Yang3} and \cite%
{Yang4} ) and other works.The inverse nodal problems for the Dirac operators
with various boundary conditions have been studied and shown that a dense
subset of the zeros of the first component of the eigenfunctions alone can
determine the coefficients of discussed problem \cite{Guo}, \cite{Yang2} and 
\cite{Yang5}. In \cite{mor}, the authors have developed the spectral theory
for a conformable fractional Sturm-Liouville problem and have proved
uniqueness theorem with respect to the nodal points.

Nowadays, the studies concerning the perturbation of a differential operator
by a Volterra type integral operator, namely the integro-differential
operator has acquired significant popularity and major attention from
several authors and \ take significant place in the literature.\cite{But}, 
\cite{But 2}, \cite{G}, \cite{Kur}, \cite{Kur2} and \cite{B} ).
Integro-differential operators are nonlocal, and therefore they are more
difficult for investigation, than local ones. New methods for solution of
these problems are being developed. For Sturm-Liouville type
integro-differential operators, there exist some studies about inverse
problems but there is very little study for Dirac type integro-differential
operators. The inverse nodal problem for Dirac type integro-differential
operators was first studied by \cite{kesk3}. In their study, it is shown
that the coefficients of the differential part of the operator can be
determined by using nodal points and nodal points also gives the partial
information about integral part. In \cite{kesk4}, the authors considered
Dirac type integro-differential operators with boundary conditions depend on
the spectral parameter linearly.

\section{\textbf{Conformable\ Fractional Preliminaries}}

Firstly, we want to recall some basic definitions and properties of
conformable fractional calculus which can be found in [\cite{halil}, \cite%
{Abdel}].

\begin{definition}
Let $f:[0,\infty )\rightarrow 
\mathbb{R}
$ be a given function. Then the conformable fractional derivative of $f$ of
order $\alpha $ is defined by:

$D_{x}^{\alpha }f(x)=\underset{\in \rightarrow 0}{\lim }\dfrac{f(x+\epsilon
x^{1-\alpha })-f(t)}{\epsilon }\alpha ,$ $D_{x}^{\alpha }f(0)=\underset{%
t\rightarrow 0^{+}}{\lim }D^{\alpha }f(x),$\newline
for all $x>0,$ $\alpha \in (0,1].$ If this limit exist and finite at $x_{0},$
we say $f$ is $\alpha -$differentiable at $x_{0}.$ Note that if $f$ is
differentiable, then $D^{\alpha }f(x)=x^{1-\alpha }f^{\prime }(x).$
\end{definition}

\begin{definition}
The conformable fractional Integral starting from $0$ of order $\alpha $ is
defined by

$I_{\alpha }f(x)=\int_{0}^{x}f(t)d_{\alpha }t=\int_{0}^{x}t^{\alpha
-1}f(t)dt,$\newline
for all $x>0.$
\end{definition}

\begin{lemma}
Let $f:[a,\infty )\rightarrow 
\mathbb{R}
$ be any continuous function. Then, for all $x>a$, we have $D_{x}^{\alpha
}I_{\alpha }f(x)=f(x).$
\end{lemma}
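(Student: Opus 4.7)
The plan is to reduce the statement to the classical Fundamental Theorem of Calculus by exploiting the bridge, already recorded in the definition of $D_{x}^{\alpha}$, that whenever a function $g$ is (ordinarily) differentiable at $x>0$ one has $D_{x}^{\alpha}g(x)=x^{1-\alpha}g'(x)$. This collapses the whole claim into two short steps: first differentiating the conformable integral in the classical sense, then multiplying by the appropriate power of $x$.

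First I would write out the conformable integral explicitly as $I_{\alpha}f(x)=\int_{0}^{x}t^{\alpha-1}f(t)\,dt$. Since $f$ is continuous on $[a,\infty)$ and $t^{\alpha-1}$ is continuous on $(0,\infty)$, the integrand $t\mapsto t^{\alpha-1}f(t)$ is continuous at every point $x>a>0$. The classical Fundamental Theorem of Calculus therefore applies and gives $\dfrac{d}{dx}I_{\alpha}f(x)=x^{\alpha-1}f(x)$, so $I_{\alpha}f$ is differentiable at $x$ in the ordinary sense.

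Having secured ordinary differentiability, I would invoke the remark from Definition 1, namely $D_{x}^{\alpha}g(x)=x^{1-\alpha}g'(x)$ for any differentiable $g$, with the choice $g=I_{\alpha}f$. This yields
\begin{equation*}
D_{x}^{\alpha}I_{\alpha}f(x)=x^{1-\alpha}\bigl(I_{\alpha}f\bigr)'(x)=x^{1-\alpha}\cdot x^{\alpha-1}f(x)=f(x),
\end{equation*}
which is exactly the desired identity.

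The only genuinely delicate point is the restriction $x>a$ (with $a\ge 0$): one must make sure the classical differentiation of $\int_{0}^{x}t^{\alpha-1}f(t)\,dt$ is legitimate. For $x>0$ this is immediate from continuity of the integrand away from $0$, and the hypothesis $x>a$ keeps us strictly positive so no singularity of $t^{\alpha-1}$ at $t=0$ is hit when we differentiate. Thus I do not anticipate a real obstacle; the argument is a one-line reduction to the ordinary Fundamental Theorem of Calculus via the chain-rule-type formula for $D^{\alpha}$ on differentiable functions.
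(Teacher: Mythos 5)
Your argument is correct: it is the standard one-line reduction to the classical Fundamental Theorem of Calculus via the identity $D_{x}^{\alpha}g(x)=x^{1-\alpha}g'(x)$ for differentiable $g$, which is exactly how this fact is established in the conformable-calculus references the paper recalls it from (the paper itself states the lemma as a preliminary without proof). The only cosmetic mismatch is that the lemma takes $f$ defined on $[a,\infty)$ while $I_{\alpha}$ is defined with lower limit $0$; your observation that $x>a$ keeps the differentiation point away from any issue at $t=0$ (and that $t^{\alpha-1}$ is locally integrable there in any case) disposes of this.
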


\begin{lemma}
Let $f:(a,b)\rightarrow 
\mathbb{R}
$ be any differentiable function. Then, for all $x>a$, we have $I_{\alpha
}D_{x}^{\alpha }f(x)=f(x)-f(a).$
\end{lemma}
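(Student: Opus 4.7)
The plan is to reduce the identity to the classical Fundamental Theorem of Calculus by exploiting the explicit formulas provided in Definitions 1 and 2. Since $f$ is differentiable on $(a,b)$, the last sentence of Definition 1 gives $D_x^\alpha f(x) = x^{1-\alpha} f'(x)$, and this closed form is the only property of $D_x^\alpha$ that will be needed.

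Next, I would substitute this expression directly into the defining formula of $I_\alpha$. Interpreting $I_\alpha$ with the lower limit chosen to match the domain of $f$, I obtain
$$I_\alpha\bigl(D_t^\alpha f\bigr)(x) = \int_a^x t^{\alpha-1}\, D_t^\alpha f(t)\, dt = \int_a^x t^{\alpha-1}\cdot t^{1-\alpha}\, f'(t)\, dt = \int_a^x f'(t)\, dt,$$
where the factors $t^{\alpha-1}$ and $t^{1-\alpha}$ cancel pointwise for $t>0$. The classical Fundamental Theorem of Calculus then produces $f(x)-f(a)$, which is exactly the claimed identity.

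The computation itself is routine; the only subtlety, and what I would regard as the main obstacle, is a consistency issue with the base point of $I_\alpha$. In Definition 2 the integral is stated with lower limit $0$, whereas the lemma concerns a function defined on $(a,b)$. One must therefore either interpret $I_\alpha$ with base point $a$ (the standard convention in the conformable calculus literature, cf.\ Abdeljawad and Khalil et al.\ cited in the paper) or restrict attention to $a\ge 0$ so that the weight $t^{\alpha-1} f'(t)$ remains locally integrable up to $x$. Once this interpretive point is settled, the cancellation $t^{\alpha-1}\cdot t^{1-\alpha}=1$ reduces everything to an ordinary Riemann integral and no further analytical difficulty arises.
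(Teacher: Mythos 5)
Your argument is correct and is exactly the standard proof of this preliminary fact from the conformable calculus literature the paper cites (Khalil et al., Abdeljawad); the paper itself states the lemma without proof, so there is nothing different to compare against. Your observation about the base point of $I_\alpha$ (Definition 2 uses lower limit $0$ while the lemma's conclusion involves $f(a)$) correctly identifies and resolves the only real subtlety, namely that one must read $I_\alpha$ here as the $\alpha$-integral starting from $a$.
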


\begin{theorem}
\textbf{(}$\alpha -$\textbf{integration by parts): }Let $f,g:[a,b]%
\rightarrow 
\mathbb{R}
$ be two conformable fractional differentiable functions. Then,

$\int_{a}^{b}f(x)D_{x}^{\alpha }g(x)d_{\alpha
}x=f(b)g(b)-f(a)g(a)-\int_{a}^{b}g(x)D_{x}^{\alpha }f(x)d_{\alpha }x$
\end{theorem}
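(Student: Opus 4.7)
The plan is to reduce the conformable integration by parts to the classical statement via the explicit formulas provided in Definitions 1 and 2. First, I would rewrite the left-hand side using the definition of the conformable integral:
\[
\int_{a}^{b} f(x)\, D_{x}^{\alpha} g(x)\, d_{\alpha}x \;=\; \int_{a}^{b} x^{\alpha-1} f(x)\, D_{x}^{\alpha} g(x)\, dx.
\]
Since $g$ is (conformable) differentiable, the observation following Definition~1 gives $D_{x}^{\alpha} g(x) = x^{1-\alpha} g'(x)$. Substituting this in produces the cancellation $x^{\alpha-1}\cdot x^{1-\alpha} = 1$, so the integral collapses to the ordinary Riemann integral $\int_{a}^{b} f(x)\, g'(x)\, dx$.

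Next, I would apply classical integration by parts on $[a,b]$ to obtain
\[
\int_{a}^{b} f(x)\, g'(x)\, dx \;=\; f(b)g(b) - f(a)g(a) - \int_{a}^{b} g(x)\, f'(x)\, dx.
\]
The final step is to run the previous reduction in reverse on the remaining integral: multiply and divide by $x^{\alpha-1} x^{1-\alpha}$ and regroup, using $x^{1-\alpha} f'(x) = D_{x}^{\alpha} f(x)$ and the definition of $d_{\alpha}x$, to rewrite $\int_{a}^{b} g(x)\, f'(x)\, dx$ as $\int_{a}^{b} g(x)\, D_{x}^{\alpha} f(x)\, d_{\alpha}x$. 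Combining these three identities yields the claim.

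The argument is essentially computational, and the only subtlety I anticipate is ensuring the pointwise identity $x^{\alpha-1}\cdot x^{1-\alpha} = 1$ is valid throughout $[a,b]$. This is immediate when $a>0$; if one wishes to allow $a=0$, a short limiting argument is needed to justify that the classical integration by parts applies despite the factor $x^{\alpha-1}$ being singular at the origin, using the continuity of $f, g$ and their $\alpha$-derivatives. No other step should pose difficulty, since all the work is done by the identification $D_{x}^{\alpha} h(x)\, d_{\alpha}x = h'(x)\, dx$ between the conformable and classical integrands.
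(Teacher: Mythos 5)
Your reduction to classical integration by parts via the identities $D_{x}^{\alpha}h(x)=x^{1-\alpha}h'(x)$ and $d_{\alpha}x=x^{\alpha-1}dx$ is correct and is the standard argument for this result; note that the paper itself states this theorem without proof, as a recalled preliminary cited from the conformable-calculus literature. Your observation that $\alpha$-differentiability at a point $x_{0}>0$ is equivalent to ordinary differentiability there (so the substitution is legitimate on $(0,b]$), together with the flagged limiting argument needed when $a=0$, covers the only genuine subtlety. Nothing further is required.
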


\begin{theorem}
($\alpha -$\textbf{Leibnitz Rule}) Let $t^{\alpha -1}f(x,t)$ and $t^{\alpha
-1}f_{x}(x,t)$ be continuous in $x$ on some regions of the $(x,t)-$plane,
including $a(x)\leq t\leq b(x),$ $x_{0}\leq x\leq x_{1}.$ If $a(x)$ and $%
b(x) $ are both $\alpha -$differentiable for $x_{0}\leq x\leq x_{1},$ then

$D_{x}^{\alpha }\left[ \int_{a(x)}^{b(x)}f(x,t)d_{\alpha }t\right]
=f(x,b(x))b^{\alpha -1}(x)D_{x}^{\alpha }b(x)-f(x,a(x))a^{\alpha
-1}(x)D_{x}^{\alpha }a(x)+\int_{a(x)}^{b(x)}D_{x}^{\alpha }f(x,t)d_{\alpha
}t.$
\end{theorem}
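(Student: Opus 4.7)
The plan is to reduce the fractional Leibniz rule to the classical one by rewriting both $D_{x}^{\alpha}$ and $\int d_{\alpha}t$ in terms of ordinary calculus, applying the familiar Leibniz rule, and then repackaging the result back into fractional notation.

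First I would use the definition of the conformable integral to write
\begin{equation*}
F(x):=\int_{a(x)}^{b(x)}f(x,t)\,d_{\alpha}t=\int_{a(x)}^{b(x)}t^{\alpha-1}f(x,t)\,dt,
\end{equation*}
so that $F$ is a classical parameter-dependent integral whose integrand $t^{\alpha-1}f(x,t)$ is, by hypothesis, continuous in $x$ and whose partial derivative with respect to $x$ equals $t^{\alpha-1}f_{x}(x,t)$, again continuous. These are exactly the regularity assumptions that allow the classical Leibniz rule to be applied.

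Next, invoking the ordinary Leibniz rule for $F$ and using the $\alpha$-differentiability (hence classical differentiability away from $0$) of $a(x)$ and $b(x)$, I obtain
\begin{equation*}
F'(x)=b^{\alpha-1}(x)f(x,b(x))b'(x)-a^{\alpha-1}(x)f(x,a(x))a'(x)+\int_{a(x)}^{b(x)}t^{\alpha-1}f_{x}(x,t)\,dt.
\end{equation*}
Then I would multiply through by $x^{1-\alpha}$, using the identity $D_{x}^{\alpha}g(x)=x^{1-\alpha}g'(x)$ for differentiable $g$. This converts $x^{1-\alpha}F'(x)$ into $D_{x}^{\alpha}F(x)$, turns $x^{1-\alpha}b'(x)$ into $D_{x}^{\alpha}b(x)$ and $x^{1-\alpha}a'(x)$ into $D_{x}^{\alpha}a(x)$, and transforms $x^{1-\alpha}f_{x}(x,t)$ (with $t$ held fixed) into $D_{x}^{\alpha}f(x,t)$. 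Re-absorbing the factor $t^{\alpha-1}\,dt$ into $d_{\alpha}t$ yields precisely the claimed formula.

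The only genuinely delicate point is the differentiation under the integral sign in the classical step; that is handled by the continuity assumption on $t^{\alpha-1}f(x,t)$ and $t^{\alpha-1}f_{x}(x,t)$ stated in the hypothesis, which guarantees uniform control over compact $x$-intervals. Everything else is bookkeeping between the two notations, so I do not expect any substantive obstacle beyond verifying that the classical hypotheses are met on the region under consideration.
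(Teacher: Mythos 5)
Your argument is correct and is the standard proof of this statement: the paper itself does not prove the $\alpha$-Leibniz rule (it is recalled from the cited references on conformable calculus), and the reduction you give — rewriting $\int_{a(x)}^{b(x)}f(x,t)\,d_{\alpha}t$ as the classical integral $\int_{a(x)}^{b(x)}t^{\alpha-1}f(x,t)\,dt$, applying the ordinary Leibniz rule, and then multiplying by $x^{1-\alpha}$ to convert $F'$, $a'$, $b'$ and $f_{x}$ back into $D_{x}^{\alpha}F$, $D_{x}^{\alpha}a$, $D_{x}^{\alpha}b$ and $D_{x}^{\alpha}f$ — is exactly how the hypotheses on $t^{\alpha-1}f$ and $t^{\alpha-1}f_{x}$ are meant to be used. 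The only caveat worth recording is that the identity $D_{x}^{\alpha}g(x)=x^{1-\alpha}g'(x)$ requires $x>0$, so the computation is valid on $0<x_{0}\leq x\leq x_{1}$, which is the setting in which the rule is applied throughout the paper.
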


\begin{definition}
The space $C_{\alpha }^{n}[a,b]$ consists of all functions defined on the
interval $[a,b]$ which are continuously $\alpha -$differentiable up to order 
$n.$
\end{definition}

\section{\textbf{Conformable Fractional Dirac Systems}}

In this work, we consider the following one-dimentional conformable
fractional Dirac type integro-differential system%
\begin{equation}
BY+\Omega (x)Y+\int\limits_{0}^{x}M(x,t)Yd_{\alpha }t=\lambda Y,\text{ \ }%
x\in (0,\pi ),
\end{equation}%
with the boundary conditions%
\begin{eqnarray}
\sin \theta y_{1}(0)+\cos \theta y_{2}(0) &=&0\medskip  \\
\sin \beta y_{1}(\pi )+\cos \beta y_{2}(\pi ) &=&0
\end{eqnarray}%
where, $\theta $ is a real constant, $\lambda $ is the spectral parameter, 
\newline
$B=\left( 
\begin{array}{cc}
0 & D_{x}^{\alpha } \\ 
-D_{x}^{\alpha } & 0%
\end{array}%
\right) ,$ \ $\Omega (x)=\left( 
\begin{array}{cc}
p(x) & 0 \\ 
0 & r(x)%
\end{array}%
\right) ,$ $M(x,t)=\left( 
\begin{array}{cc}
M_{11}(x,t) & M_{12}(x,t) \\ 
M_{21}(x,t) & M_{22}(x,t)%
\end{array}%
\right) ,$ \ $Y=\left( 
\begin{array}{c}
y_{1} \\ 
y_{2}%
\end{array}%
\right) ,$ $p(x)$, $r(x)$, $M\left( x,t\right) $ are real-valued conformable
fractional differentiable functions and $x^{\alpha -1}p(x)$ and $x^{\alpha
-1}r(x)$ are continuous on $(0,\pi )$.

Let $\varphi (x,\lambda )=\left( \varphi _{1}(x,\lambda ),\varphi
_{2}(x,\lambda )\right) ^{T}$ be the solution of (1) satisfying the initial
condition $\varphi (0,\lambda )=(\cos \theta ,-\sin \theta )^{T}$. It is
clear that $\varphi (x,\lambda )$ is an entire function of $\lambda $
satisfies the following conformable fractional Volterra integral equations:%
\begin{equation}
\begin{array}{l}
\varphi _{1}(x,\lambda )=\cos \theta \cos \left( \lambda \dfrac{x^{\alpha }}{%
\alpha }\right) +\sin \theta \sin \left( \lambda \dfrac{x^{\alpha }}{\alpha }%
\right) \medskip \\ 
+\int_{0}^{x}\sin \left( \lambda \dfrac{x^{\alpha }-t^{\alpha }}{\alpha }%
\right) p(t)\varphi _{1}(t,\lambda )d_{\alpha }t+\int_{0}^{x}\cos \left(
\lambda \dfrac{x^{\alpha }-t^{\alpha }}{\alpha }\right) r(t)\varphi
_{2}(t,\lambda )d_{\alpha }t\medskip \\ 
+\int_{0}^{x}\int_{0}^{t}\sin \left( \lambda \dfrac{x^{\alpha }-t^{\alpha }}{%
\alpha }\right) \left\{ M_{11}(t,\xi )\varphi _{1}(\lambda ,\xi
)+M_{12}(t,\xi )\varphi _{2}(\lambda ,\xi )\right\} d_{\alpha }\xi d_{\alpha
}t\medskip \\ 
+\int_{0}^{x}\int_{0}^{t}\cos \left( \lambda \dfrac{x^{\alpha }-t^{\alpha }}{%
\alpha }\right) \left\{ M_{21}(t,\xi )\varphi _{1}(\lambda ,\xi
)+M_{22}(t,\xi )\varphi _{2}(\lambda ,\xi )\right\} d_{\alpha }\xi d_{\alpha
}t%
\end{array}%
\end{equation}%
\begin{equation}
\begin{array}{l}
\varphi _{2}(x,\lambda )=\cos \theta \sin \left( \lambda \dfrac{x^{\alpha }}{%
\alpha }\right) -\sin \theta \cos \left( \lambda \dfrac{x^{\alpha }}{\alpha }%
\right) \medskip \\ 
-\int_{0}^{x}\cos \left( \lambda \dfrac{x^{\alpha }-t^{\alpha }}{\alpha }%
\right) p(t)\varphi _{1}(t,\lambda )d_{\alpha }t+\int_{0}^{x}\sin \left(
\lambda \dfrac{x^{\alpha }-t^{\alpha }}{\alpha }\right) r(t)\varphi
_{2}(t,\lambda )d_{\alpha }t\medskip \\ 
-\int_{0}^{x}\int_{0}^{t}\cos \left( \lambda \dfrac{x^{\alpha }-t^{\alpha }}{%
\alpha }\right) \left\{ M_{11}(t,\xi )\varphi _{1}(\lambda ,\xi
)+M_{12}(t,\xi )\varphi _{2}(\lambda ,\xi )\right\} d_{\alpha }\xi d_{\alpha
}t\medskip \\ 
+\int_{0}^{x}\int_{0}^{t}\sin \left( \lambda \dfrac{x^{\alpha }-t^{\alpha }}{%
\alpha }\right) \left\{ M_{21}(t,\xi )\varphi _{1}(\lambda ,\xi
)+M_{22}(t,\xi )\varphi _{2}(\lambda ,\xi )\right\} d_{\alpha }\xi d_{\alpha
}t%
\end{array}%
\end{equation}

\begin{lemma}
Let $\ f(x)$ be a function in $C_{\alpha }^{1}[0,\pi ],$ then

$\underset{\left\vert \lambda \right\vert \rightarrow \infty }{\lim }\exp
(-\left\vert \func{Im}\lambda \dfrac{\pi ^{\alpha }}{\alpha }\right\vert
)\int_{0}^{\pi }f(x)\cos \lambda \dfrac{x^{\alpha }}{\alpha }d_{\alpha }x=0$

and

$\underset{\left\vert \lambda \right\vert \rightarrow \infty }{\lim }\exp
(-\left\vert \func{Im}\lambda \dfrac{\pi ^{\alpha }}{\alpha }\right\vert
)\int_{0}^{\pi }f(x)\sin \lambda \dfrac{x^{\alpha }}{\alpha }d_{\alpha }x=0$%
\bigskip

\begin{theorem}
For $\left\vert \lambda \right\vert \rightarrow \infty ,$ the following
asymptotic formulae are valid:%
\begin{eqnarray}
&&\left. \varphi _{1}(x,\lambda )=\cos \left[ \lambda \dfrac{x^{\alpha }}{%
\alpha }-\mu (x)-\theta \right] +\frac{1}{2\lambda }\upsilon (x)\cos \left[
\lambda \dfrac{x^{\alpha }}{\alpha }-\mu (x)-\theta \right] \right. \medskip
\notag \\
&&-\frac{1}{2\lambda }\upsilon (0)\cos \left[ \lambda \dfrac{x^{\alpha }}{%
\alpha }-\mu (x)+\theta \right] +\frac{1}{2\lambda }\sin \left[ \lambda 
\dfrac{x^{\alpha }}{\alpha }-\mu (x)-\theta \right] \int_{0}^{x}\upsilon
^{2}(t)d_{\alpha }t\medskip \\
&&-\frac{1}{2\lambda }K\left( x\right) \cos \left[ \lambda \dfrac{x^{\alpha }%
}{\alpha }-\mu (x)-\theta \right] -\frac{1}{2\lambda }L\left( x\right) \sin %
\left[ \lambda \dfrac{x^{\alpha }}{\alpha }-\mu (x)-\theta \right] \medskip 
\notag \\
&&+o\left( \frac{1}{\lambda }\exp (\left\vert \tau \right\vert \dfrac{%
x^{\alpha }}{\alpha })\right) ,  \notag
\end{eqnarray}%
\begin{eqnarray}
&&\left. \varphi _{2}(x,\lambda )=\sin \left[ \lambda \dfrac{x^{\alpha }}{%
\alpha }-\mu (x)-\theta \right] -\frac{1}{2\lambda }\upsilon (x)\sin \left[
\lambda \dfrac{x^{\alpha }}{\alpha }-\mu (x)-\theta \right] \right. \medskip
\notag \\
&&-\frac{1}{2\lambda }\upsilon (0)\sin \left[ \lambda \dfrac{x^{\alpha }}{%
\alpha }-\mu (x)+\theta \right] -\frac{1}{2\lambda }\cos \left[ \lambda 
\dfrac{x^{\alpha }}{\alpha }-\mu (x)-\theta \right] \int_{0}^{x}\upsilon
^{2}(t)d_{\alpha }t\medskip \\
&&-\frac{1}{2\lambda }K\left( x\right) \sin \left[ \lambda \dfrac{x^{\alpha }%
}{\alpha }-\mu (x)-\theta \right] +\frac{1}{2\lambda }L\left( x\right) \cos %
\left[ \lambda \dfrac{x^{\alpha }}{\alpha }-\mu (x)-\theta \right] \medskip 
\notag \\
&&+o\left( \frac{1}{\lambda }\exp (\left\vert \tau \right\vert \dfrac{%
x^{\alpha }}{\alpha })\right) ,  \notag
\end{eqnarray}%
uniformly in $x\in \lbrack 0,\pi ],$ where, $\mu (x)=\dfrac{1}{2}%
\int_{0}^{x}(p(t)+r(t))d_{\alpha }t,$ $\upsilon (x)=\dfrac{1}{2}\left(
p(x)-r(x)\right) ,$ $K(x)=\int_{0}^{x}(M_{11}(t,t)-M_{22}(t,t))d_{\alpha }t,$
$L(x)=\int_{0}^{x}(M_{12}(t,t)-M_{21}(t,t))d_{\alpha }t$ \newline
and $\tau =\func{Im}\lambda .$
\end{theorem}
\end{lemma}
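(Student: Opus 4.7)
My plan is to derive the asymptotic expansions by iterating the Volterra integral system (4)--(5) once and then extracting the leading and $1/\lambda$ order contributions by repeated $\alpha$-integration by parts (Theorem on $\alpha$-integration by parts). The argument falls naturally into four stages.

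\textbf{Stage 1 (coarse bound).} First I would establish the a priori estimate $|\varphi_j(x,\lambda)|\le C\exp(|\tau|x^{\alpha}/\alpha)$ uniformly on $[0,\pi]$ by a Gronwall-type argument applied directly to (4)--(5): the non-integral parts are bounded by $\exp(|\tau|x^{\alpha}/\alpha)$, and each kernel $\sin,\cos(\lambda(x^{\alpha}-t^{\alpha})/\alpha)$ satisfies the same bound, so the conformable analogue of Gronwall's inequality closes. This bound will be used throughout to control all error terms.

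\textbf{Stage 2 (leading order and phase shift $\mu(x)$).} Starting from $\varphi_j^{(0)}(x,\lambda)=\cos(\lambda x^{\alpha}/\alpha-\theta)$ and $\sin(\lambda x^{\alpha}/\alpha-\theta)$ as the zeroth iterates, I would plug them into the single integrals in (4)--(5) and use the product-to-sum identities $2\sin A\cos B=\sin(A+B)+\sin(A-B)$ and $2\cos A\cos B=\cos(A-B)+\cos(A+B)$. One half of each resulting pair combines into a term proportional to $p(t)+r(t)$ whose integral against the slowly varying phase $\lambda x^{\alpha}/\alpha-\theta$ contributes the phase shift $-\mu(x)=-\tfrac12\int_0^x(p+r)\,d_\alpha t$ to the leading trigonometric functions (expanding $\cos(\lambda x^{\alpha}/\alpha-\theta-\mu(x))$ in powers of $\mu$ recovers exactly these contributions to all needed orders). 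The other half is purely oscillatory and will be handled in Stage 3.

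\textbf{Stage 3 (the $\upsilon(x)$ corrections).} For the oscillatory residues from Stage 2, as well as for the $\int_0^x\upsilon^2\,d_\alpha t$ term, I would apply $\alpha$-integration by parts: differentiating the $\sin(\lambda(x^{\alpha}-2t^{\alpha})/\alpha+\theta)$ factor in $t$ gives a factor $-2\lambda$, so each such integral is $O(1/\lambda)\exp(|\tau|x^{\alpha}/\alpha)$ with an explicit boundary contribution at $t=x$ and $t=0$ producing the terms $\tfrac{1}{2\lambda}\upsilon(x)\cos[\cdots]$ and $-\tfrac{1}{2\lambda}\upsilon(0)\cos[\cdots]$ (and the corresponding sine forms for $\varphi_2$). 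The interior derivative of $\upsilon$ yields, after one more integration by parts, the $\tfrac{1}{2\lambda}\sin[\cdots]\int_0^x\upsilon^2\,d_\alpha t$ term.

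\textbf{Stage 4 (the $K(x),L(x)$ terms from the double integrals).} For the four double integrals in (4)--(5), I would first apply the Stage 1 bound to replace $\varphi_j(\xi,\lambda)$ by its zeroth-order trigonometric form up to $o(1)\exp(|\tau|\xi^{\alpha}/\alpha)$ errors, then perform $\alpha$-integration by parts in the inner variable $\xi$. Because the kernels $\sin,\cos(\lambda(x^{\alpha}-t^{\alpha})/\alpha)$ do not depend on $\xi$, the $1/\lambda$ factor comes from integrating by parts against the oscillating $\varphi_j(\xi,\lambda)$. The boundary contribution at $\xi=t$ produces $M_{ij}(t,t)$ factors, and the remaining $t$-integrals over $\sin,\cos(\lambda(x^{\alpha}-t^{\alpha})/\alpha)\cdot M_{ii}(t,t)\cos/\sin(\lambda t^{\alpha}/\alpha-\theta)$ expand by product-to-sum identities into a stationary piece (yielding $-\tfrac{1}{2\lambda}K(x)\cos[\cdots]$ and $\tfrac{1}{2\lambda}L(x)\sin[\cdots]$ after assembling the four contributions) plus an oscillatory piece that is $o(1/\lambda)\exp(|\tau|x^{\alpha}/\alpha)$ by the Riemann--Lebesgue-type Lemma stated just above.

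\textbf{Main obstacle.} The bookkeeping in Stage 4 is the delicate step: four double integrals, each producing a boundary contribution and a double oscillatory residue, must be combined in the right sign pattern so that only the antisymmetric diagonal combinations $M_{11}-M_{22}$ and $M_{12}-M_{21}$ survive in $K(x)$ and $L(x)$, while the symmetric pieces cancel or are absorbed into the $o(1/\lambda)$ remainder. Carrying out the $\alpha$-Leibnitz rule correctly for the variable upper limit $t$ in the outer integral, and verifying that the symmetric combinations indeed vanish at this order, is the technical core of the argument.
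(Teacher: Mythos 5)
Your plan follows essentially the same route as the paper, whose proof simply defines the successive approximations $\varphi_{j,n}$ for the Volterra system (4)--(5) and invokes the Riemann--Lebesgue-type lemma; your Stages 2--4 spell out the integration-by-parts and product-to-sum bookkeeping that the paper leaves entirely implicit. The one point to state more carefully is that a single iteration does not suffice: the $O(1)$ phase $\mu(x)$ inside the trigonometric arguments emerges only after resumming the full Neumann series (the $n$-th iterate contributing the $(-\mu)^{n}/n!$ term), which your Stage 2 gestures at with ``to all needed orders'' but which is the actual mechanism and should be made explicit.
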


\begin{proof}
we denote 
\begin{eqnarray*}
&&\left. \varphi _{1,0}(x,\lambda )=\cos \left( \lambda \dfrac{x^{\alpha }}{%
\alpha }-\theta \right) ,\medskip \right. \\
&&\left. \varphi _{1,n+1}(x,\lambda )=\int_{0}^{x}\sin \left( \lambda \dfrac{%
x^{\alpha }-t^{\alpha }}{\alpha }\right) p(t)\varphi _{1,n}(t,\lambda
)d_{\alpha }t\right. \\
&&+\int_{0}^{x}\cos \left( \lambda \dfrac{x^{\alpha }-t^{\alpha }}{\alpha }%
\right) r(t)\varphi _{2,n}(t,\lambda )d_{\alpha }t\medskip \\
&&+\int_{0}^{x}\int_{0}^{t}\sin \left( \lambda \dfrac{x^{\alpha }-t^{\alpha }%
}{\alpha }\right) \left\{ M_{11}(t,\xi )\varphi _{1,n}(\lambda ,\xi
)+M_{12}(t,\xi )\varphi _{2,n}(\lambda ,\xi )\right\} d_{\alpha }\xi
d_{\alpha }t\medskip \\
&&+\int_{0}^{x}\int_{0}^{t}\cos \left( \lambda \dfrac{x^{\alpha }-t^{\alpha }%
}{\alpha }\right) \left\{ M_{21}(t,\xi )\varphi _{1,n}(\lambda ,\xi
)+M_{22}(t,\xi )\varphi _{2,n}(\lambda ,\xi )\right\} d_{\alpha }\xi
d_{\alpha }t,\medskip
\end{eqnarray*}%
\begin{eqnarray*}
&&\left. \varphi _{2,0}(x,\lambda )=\sin \left( \lambda \dfrac{x^{\alpha }}{%
\alpha }-\theta \right) ,\medskip \right. \\
&&\left. \varphi _{2,n+1}(x,\lambda )=-\int_{0}^{x}\cos \left( \lambda 
\dfrac{x^{\alpha }-t^{\alpha }}{\alpha }\right) p(t)\varphi _{1,n}(t,\lambda
)d_{\alpha }t\right. \medskip \\
&&+\int_{0}^{x}\sin \left( \lambda \dfrac{x^{\alpha }-t^{\alpha }}{\alpha }%
\right) r(t)\varphi _{2,n}(t,\lambda )d_{\alpha }t\medskip \\
&&-\int_{0}^{x}\int_{0}^{t}\cos \left( \lambda \dfrac{x^{\alpha }-t^{\alpha }%
}{\alpha }\right) \left\{ M_{11}(t,\xi )\varphi _{1,n}(\lambda ,\xi
)+M_{12}(t,\xi )\varphi _{2,n}(\lambda ,\xi )\right\} d_{\alpha }\xi
d_{\alpha }t\medskip \\
&&+\int_{0}^{x}\int_{0}^{t}\sin \left( \lambda \dfrac{x^{\alpha }-t^{\alpha }%
}{\alpha }\right) \left\{ M_{21}(t,\xi )\varphi _{1,n}(\lambda ,\xi
)+M_{22}(t,\xi )\varphi _{2,n}(\lambda ,\xi )\right\} d_{\alpha }\xi
d_{\alpha }t,
\end{eqnarray*}%
applying successive approximations method to the equations (4) and (5) and
using Lemma3, we get the estimates (6) and (7)$,$
\end{proof}

\bigskip

The characteristic function $\Delta (\lambda )$ of the problem (1)-(3) is
defined by the relation 
\begin{equation}
\Delta (\lambda )=\varphi _{1}(\pi ,\lambda )\sin \beta +\varphi _{2}(\pi
,\lambda )\cos \beta ,
\end{equation}%
It is obvious that $\Delta (\lambda )$ is an entire function and its zeros,
namely $\left\{ \lambda _{n}\right\} _{n\in 
\mathbb{Z}
}$ ,\ coincide with the eigenvalues of the problem (1)-(3). Using the
asymptotic formulae (6) and (7), one can easily obtain%
\begin{eqnarray}
&&\left. \Delta (\lambda )=\sin \left[ \lambda \dfrac{x^{\alpha }}{\alpha }%
-\mu (x)-\theta +\beta \right] -\frac{1}{2\lambda }\upsilon (x)\sin \left[
\lambda \dfrac{x^{\alpha }}{\alpha }-\mu (x)-\theta -\beta \right] \right.
\medskip  \notag \\
&&-\frac{1}{2\lambda }\upsilon (0)\sin \left[ \lambda \dfrac{x^{\alpha }}{%
\alpha }-\mu (x)+\theta +\beta \right] \medskip  \notag \\
&&-\frac{1}{2\lambda }\cos \left[ \lambda \dfrac{x^{\alpha }}{\alpha }-\mu
(x)-\theta +\beta \right] \int_{0}^{x}\upsilon ^{2}(t)d_{\alpha }t\medskip \\
&&-\frac{1}{2\lambda }K\left( x\right) \sin \left[ \lambda \dfrac{x^{\alpha }%
}{\alpha }-\mu (x)-\theta +\beta \right] \medskip  \notag \\
&&+\frac{1}{2\lambda }L\left( x\right) \cos \left[ \lambda \dfrac{x^{\alpha }%
}{\alpha }-\mu (x)-\theta +\beta \right] +o\left( \frac{1}{\lambda }\exp
(\left\vert \tau \right\vert \dfrac{x^{\alpha }}{\alpha })\right) ,  \notag
\end{eqnarray}%
for sufficiently large $\left\vert \lambda \right\vert .$ Since the
eigenvalues of the problem (1)-(3) are the roots of $\Delta (\lambda _{n})=0$%
, we can write the following equation for them:%
\begin{equation*}
\begin{array}{l}
\left( 1-\dfrac{1}{2\lambda _{n}}\upsilon (\pi )\cos 2\beta -\dfrac{1}{%
2\lambda _{n}}\upsilon (0)\cos 2\theta -\dfrac{1}{2\lambda _{n}}K(\pi
)\right) \tan (\lambda _{n}\dfrac{\pi ^{\alpha }}{\alpha }-\mu (\pi )-\theta
+\beta )=\medskip \\ 
-\dfrac{1}{2\lambda _{n}}\upsilon (\pi )\sin 2\beta -\dfrac{1}{2\lambda _{n}}%
\upsilon (0)\sin 2\theta +\dfrac{1}{2\lambda _{n}}\int_{0}^{\pi }\upsilon
^{2}(t)d_{\alpha }t-\dfrac{1}{2\lambda _{n}}L(\pi )+o\left( \dfrac{1}{%
\lambda _{n}}\right) \medskip%
\end{array}%
\end{equation*}%
which implies that%
\begin{equation*}
\begin{array}{l}
\tan (\lambda _{n}\dfrac{\pi ^{\alpha }}{\alpha }-\mu (\pi )-\theta +\beta
)=\medskip \\ 
\left( 1-\dfrac{1}{2\lambda _{n}}\upsilon (\pi )\cos 2\beta -\dfrac{1}{%
2\lambda _{n}}\upsilon (0)\cos 2\theta -\dfrac{1}{2\lambda _{n}}K(\pi
)\right) ^{-1}\times \medskip \\ 
\times \left( -\dfrac{1}{2\lambda _{n}}\upsilon (\pi )\sin 2\beta -\dfrac{1}{%
2\lambda _{n}}\upsilon (0)\sin 2\theta +\dfrac{1}{2\lambda _{n}}%
\int_{0}^{\pi }\upsilon ^{2}(t)d_{\alpha }t-\dfrac{1}{2\lambda _{n}}L(\pi
)+o\left( \dfrac{1}{\lambda _{n}}\right) \right)%
\end{array}%
\end{equation*}%
for sufficiently large $n.$

We obtain from the last equation,%
\begin{eqnarray}
&&\left. \lambda _{n}=\dfrac{\alpha }{\pi ^{\alpha -1}}n+\alpha \dfrac{%
\theta +\mu (\pi )-\beta }{\pi ^{\alpha }}\medskip \right.  \notag \\
&&+\dfrac{\alpha }{2n\pi ^{\alpha }}\left( \upsilon (\pi )\sin 2\beta
-\upsilon (0)\sin 2\theta +\int_{0}^{\pi }\upsilon ^{2}(t)d_{\alpha }t-L(\pi
)\right) \medskip \\
&&+o\left( \dfrac{1}{n}\right)  \notag
\end{eqnarray}%
for $\left\vert n\right\vert \rightarrow \infty .$

\section{\textbf{Main Results}}

In this section, we obtain the asymptotic formula for the nodal points of
considered problem and prove an inverse nodal problem for the
one-dimensional conformable fractional Dirac-type integro differential
system.

\begin{lemma}
For sufficiently large $n$, the first component $\varphi _{1}(x,\lambda
_{n}) $ of the eigenfunction $\varphi (x,\lambda _{n})$ has exactly $n$
nodes $\left\{ x_{n}^{j}:j=\overline{0,n-1}\right\} $ in the interval $%
\left( 0,\pi \right) $:\newline
$0<x_{n}^{0}<x_{n}^{1}<...<x_{n}^{n-1}<\pi $. The numbers $\left\{
x_{n}^{j}\right\} $ satisfy the following asymptotic formula:%
\begin{equation}
\begin{array}{l}
\left( x_{n}^{j}\right) ^{\alpha }=\dfrac{\left( j+1/2\right) \pi ^{\alpha }%
}{n}+\dfrac{\mu (x_{n}^{j})+\theta }{n\pi ^{1-\alpha }}\medskip \\ 
-\dfrac{\left( j+1/2\right) \pi ^{\alpha }}{n}\left( \dfrac{\theta +\mu (\pi
)-\beta }{n\pi }\right) -\dfrac{\theta +\mu (\pi )-\beta }{\pi ^{2-\alpha
}n^{2}}\left( \mu (x_{n}^{j})+\theta \right) \medskip \\ 
+\dfrac{\alpha }{2n^{2}}\left( \upsilon (0)\sin 2\theta
+\int_{0}^{x_{n}^{j}}\upsilon ^{2}(t)d_{\alpha }t-L\left( x_{n}^{j}\right)
\right) +o\left( \frac{1}{n^{2}}\right) .%
\end{array}%
\end{equation}
\end{lemma}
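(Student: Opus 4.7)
The plan is to substitute the eigenvalue $\lambda_n$ into the asymptotic expansion (6) and solve $\varphi_{1}(x_n^j,\lambda_n)=0$ asymptotically in $1/n$. Let $\psi=\lambda_n (x_n^j)^{\alpha}/\alpha-\mu(x_n^j)-\theta$ denote the leading phase. Because every correction term in (6) carries a factor $1/\lambda_n=O(1/n)$, any node of $\varphi_{1}(\cdot,\lambda_n)$ must satisfy $\psi=(j+1/2)\pi+\delta_n^{j}$ with $\delta_n^{j}=O(1/n)$. Write $\delta=\delta_n^{j}$. Using the identities $\cos((j+1/2)\pi+\delta)=(-1)^{j+1}\sin\delta$ and $\sin((j+1/2)\pi+\delta)=(-1)^{j}\cos\delta$, and expanding $\cos(\psi+2\theta)=\cos\psi\cos 2\theta-\sin\psi\sin 2\theta$ to isolate the contribution of the $\upsilon(0)$ term, the equation $\varphi_{1}(x_n^j,\lambda_n)=0$ reduces, after dividing by $(-1)^j$ and discarding second-order quantities such as $\delta/\lambda_n$, to
\[
-\delta+\frac{1}{2\lambda_n}\!\left(\upsilon(0)\sin 2\theta+\int_{0}^{x_n^{j}}\!\upsilon^{2}(t)\,d_{\alpha}t-L(x_n^{j})\right)+o\!\left(\frac{1}{n}\right)=0,
\]
which gives the leading behaviour of $\delta_n^{j}$.

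Next, I would invert the phase relation to obtain
\[
(x_n^{j})^{\alpha}=\frac{\alpha}{\lambda_n}\bigl[(j+1/2)\pi+\mu(x_n^{j})+\theta+\delta_n^{j}\bigr],
\]
and substitute the expansion of $\lambda_n^{-1}$ coming from (10), namely $\alpha/\lambda_n=(\pi^{\alpha-1}/n)\bigl[1-(\theta+\mu(\pi)-\beta)/(n\pi)+O(1/n^{2})\bigr]$. Expanding the product and retaining terms up to order $1/n^{2}$ produces three separate $O(1/n^{2})$ contributions: the product of $(j+1/2)\pi$ with the $1/n$ correction from $\lambda_n^{-1}$ gives the third summand of (11); the product of $\pi^{\alpha-1}/n$ with $\mu(x_n^{j})+\theta$ gives both the second summand and the mixed cross-term $-(\theta+\mu(\pi)-\beta)(\mu(x_n^{j})+\theta)/(\pi^{2-\alpha}n^{2})$; and the product of $\pi^{\alpha-1}/n$ with the leading part of $\delta_n^{j}$ contributes the final bracket involving $\upsilon(0)\sin 2\theta$, $\int_{0}^{x_n^{j}}\upsilon^{2}d_{\alpha}t$ and $L(x_n^{j})$.

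For the nodal count, I would observe from (10) that $\psi(0,\lambda_n)=-\theta$ and $\psi(\pi,\lambda_n)=n\pi-\beta+o(1)$, while the derivative of $\psi$ in $x$ equals $x^{\alpha-1}\bigl(\lambda_n-\tfrac{1}{2}(p+r)\bigr)$, which is strictly positive on $(0,\pi)$ for large $n$. Hence $\psi(\cdot,\lambda_n)$ increases monotonically and crosses the values $(j+1/2)\pi$ exactly once for each $j=0,1,\dots,n-1$; since the corrections in (6) shift each crossing only by $O(1/n)$, $\varphi_{1}(\cdot,\lambda_n)$ has exactly $n$ simple zeros in $(0,\pi)$ for sufficiently large $n$. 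The main technical obstacle is bookkeeping: one has to track contributions of order $1/n$ in both $\delta_n^{j}$ and $\lambda_n^{-1}$ and combine them consistently to produce the three $O(1/n^{2})$ terms in (11), while verifying that the remainder $o(1/n^{2})$ is uniform in $j$; this uniformity is inherited from the uniformity in $x\in[0,\pi]$ of the expansion (6).
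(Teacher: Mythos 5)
Your proposal follows essentially the same route as the paper's proof: set $\varphi_{1}=0$ in the expansion (6) at $\lambda=\lambda_{n}$, reduce to the phase equation $\lambda_{n}(x_{n}^{j})^{\alpha}/\alpha-\mu(x_{n}^{j})-\theta=(j+1/2)\pi+\delta_{n}^{j}$ with $\delta_{n}^{j}=\frac{1}{2\lambda_{n}}\bigl(\upsilon(0)\sin 2\theta+\int_{0}^{x_{n}^{j}}\upsilon^{2}\,d_{\alpha}t-L(x_{n}^{j})\bigr)+o(1/\lambda_{n})$ (the paper gets this via a tangent equation and the arctangent Taylor expansion, you via directly expanding $\cos((j+1/2)\pi+\delta)$ and $\sin((j+1/2)\pi+\delta)$ --- the same computation), and then substitute the expansion of $\lambda_{n}^{-1}$ obtained from (10). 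The only substantive addition is your monotonicity argument for the count of exactly $n$ nodes, which the paper's proof omits entirely and which is a worthwhile supplement.
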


\begin{proof}
From (6), the following asymptotic formula can be written for sufficiently
large $n$%
\begin{eqnarray*}
&&\left. \varphi _{1}(x,\lambda _{n})=\cos \left[ \lambda _{n}\dfrac{%
x^{\alpha }}{\alpha }-\mu (x)-\theta \right] +\frac{1}{2\lambda _{n}}%
\upsilon (x)\cos \left[ \lambda _{n}\dfrac{x^{\alpha }}{\alpha }-\mu
(x)-\theta \right] \right. \medskip \\
&&-\frac{1}{2\lambda _{n}}\upsilon (0)\cos \left[ \lambda _{n}\dfrac{%
x^{\alpha }}{\alpha }-\mu (x)+\theta \right] +\frac{1}{2\lambda _{n}}\sin %
\left[ \lambda _{n}\dfrac{x^{\alpha }}{\alpha }-\mu (x)-\theta \right]
\int_{0}^{x}\upsilon ^{2}(t)d_{\alpha }t\medskip \\
&&-\frac{1}{2\lambda _{n}}K\left( x\right) \cos \left[ \lambda _{n}\dfrac{%
x^{\alpha }}{\alpha }-\mu (x)-\theta \right] -\frac{1}{2\lambda _{n}}L\left(
x\right) \sin \left[ \lambda _{n}\dfrac{x^{\alpha }}{\alpha }-\mu (x)-\theta %
\right] \medskip \\
&&+o\left( \frac{1}{\lambda _{n}}\exp (\left\vert \tau \right\vert \dfrac{%
x^{\alpha }}{\alpha })\right) ,
\end{eqnarray*}%
from $\varphi _{1}(\left( x_{n}^{j}\right) ^{\alpha },\lambda _{n})=0,$ we
get%
\begin{eqnarray*}
&&\left. \cos \left[ \lambda _{n}\dfrac{\left( x_{n}^{j}\right) ^{\alpha }}{%
\alpha }-\mu (x_{n}^{j})-\theta \right] =-\frac{1}{2\lambda _{n}}\upsilon
(x_{n}^{j})\cos \left[ \lambda _{n}\dfrac{\left( x_{n}^{j}\right) ^{\alpha }%
}{\alpha }-\mu (x_{n}^{j})-\theta \right] \right. \medskip \\
&&+\frac{1}{2\lambda _{n}}\upsilon (0)\cos \left[ \lambda _{n}\dfrac{\left(
x_{n}^{j}\right) ^{\alpha }}{\alpha }-\mu (x_{n}^{j})-\theta \right] \cos
2\theta \medskip \\
&&-\frac{1}{2\lambda _{n}}\upsilon (0)\sin \left[ \lambda _{n}\dfrac{\left(
x_{n}^{j}\right) ^{\alpha }}{\alpha }-\mu (x_{n}^{j})-\theta \right] \sin
2\theta \medskip \\
&&-\frac{1}{2\lambda _{n}}\sin \left[ \lambda _{n}\dfrac{\left(
x_{n}^{j}\right) ^{\alpha }}{\alpha }-\mu (x_{n}^{j})-\theta \right]
\int_{0}^{x_{n}^{j}}\upsilon ^{2}(t)d_{\alpha }t\medskip \\
&&+\frac{1}{2\lambda _{n}}K\left( x_{n}^{j}\right) \cos \left[ \lambda _{n}%
\dfrac{\left( x_{n}^{j}\right) ^{\alpha }}{\alpha }-\mu (x_{n}^{j})-\theta %
\right] \medskip \\
&&+\frac{1}{2\lambda _{n}}L\left( x_{n}^{j}\right) \sin \left[ \lambda _{n}%
\dfrac{\left( x_{n}^{j}\right) ^{\alpha }}{\alpha }-\mu (x_{n}^{j})-\theta %
\right] +o\left( \frac{1}{\lambda _{n}}\right) ,
\end{eqnarray*}%
$\medskip $%
\begin{eqnarray*}
&&\left. \tan \left[ \lambda _{n}\dfrac{\left( x_{n}^{j}\right) ^{\alpha }}{%
\alpha }-\mu (x_{n}^{j})-\theta -\frac{\pi }{2}\right] \left( 1+\frac{1}{%
2\lambda _{n}}\upsilon (x_{n}^{j})-\frac{1}{2\lambda _{n}}\upsilon (0)\cos
2\theta -\frac{1}{2\lambda _{n}}K\left( x_{n}^{j}\right) \right) =\right.
\medskip \\
&&\frac{1}{2\lambda _{n}}\upsilon (0)\sin 2\theta +\frac{1}{2\lambda _{n}}%
\int_{0}^{x_{n}^{j}}\upsilon ^{2}(t)d_{\alpha }t-\frac{1}{2\lambda _{n}}%
L\left( x_{n}^{j}\right) +o\left( \frac{1}{\lambda _{n}}\right) ,
\end{eqnarray*}%
\newline
Taking into account Taylor's expansion formula for the arctangent, we get$%
\medskip $ 
\begin{equation*}
\lambda _{n}\dfrac{\left( x_{n}^{j}\right) ^{\alpha }}{\alpha }-\mu
(x_{n}^{j})-\theta -\frac{\pi }{2}=j\pi +\dfrac{1}{2\lambda _{n}}\left(
\upsilon (0)\sin 2\theta +\int_{0}^{x_{n}^{j}}\upsilon ^{2}(t)d_{\alpha
}t-L\left( x_{n}^{j}\right) \right) +o\left( \dfrac{1}{\lambda _{n}}\right)
.\medskip
\end{equation*}%
It follows from the last equality

\begin{equation*}
\dfrac{\left( x_{n}^{j}\right) ^{\alpha }}{\alpha }=\dfrac{\left( j+\frac{1}{%
2}\right) \pi +\mu (x_{n}^{j})+\theta }{\lambda _{n}}+\dfrac{1}{2\lambda
_{n}^{2}}\left( \upsilon (0)\sin 2\theta +\int_{0}^{x_{n}^{j}}\upsilon
^{2}(t)d_{\alpha }t-L\left( x_{n}^{j}\right) \right) +o\left( \dfrac{1}{%
\lambda _{n}^{2}}\right) .
\end{equation*}%
The relation (11) is proven by using the asymptotic formula%
\begin{equation*}
\lambda _{n}^{-1}=\dfrac{\pi ^{\alpha -1}}{n\alpha }\left\{ 1-\dfrac{\mu
(\pi )+\theta -\beta }{n\pi }-\dfrac{\left( \upsilon (\pi )\sin 2\beta
-\upsilon (0)\sin 2\theta +\int_{0}^{\pi }\upsilon ^{2}(t)d_{\alpha }t-L(\pi
)\right) }{2n^{2}\pi }+o\left( \dfrac{1}{n^{2}}\right) \right\}
\end{equation*}
\end{proof}

Let $X$ be the set of nodal points and $\mu (\pi )=0.$ For each fixed $x\in
\left( 0,\pi \right) $ and $\alpha \in (0,1]$ $\ $we can choose a sequence $%
\left( x_{n}^{j}\right) \subset X$ so that $x_{n}^{j}$ converges to $x.$
Then the following limits are exist and finite:%
\begin{equation*}
\underset{\left\vert n\right\vert \rightarrow \infty }{\lim }n\left( \left(
x_{n}^{j(n)}\right) ^{\alpha }-\dfrac{\left( j+1/2\right) \pi ^{\alpha }}{n}%
\right) =f(x),
\end{equation*}%
where%
\begin{equation}
f(x)=\dfrac{\mu (x)+\theta }{\pi ^{1-\alpha }}-\dfrac{x^{\alpha }}{\pi }%
\left( \theta +\mu (\pi )-\beta \right)
\end{equation}%
and%
\begin{equation*}
\underset{\left\vert n\right\vert \rightarrow \infty }{\lim }2n^{2}\left(
\left( x_{n}^{j(n)}\right) ^{\alpha }-\dfrac{\left( j+1/2\right) \pi
^{\alpha }}{n}-\dfrac{\mu (x_{n}^{j})+\theta }{n\pi ^{1-\alpha }}+\dfrac{%
\left( j+1/2\right) \pi ^{\alpha }}{n}\left( \dfrac{\theta +\mu (\pi )-\beta 
}{n\pi }\right) \right) =g(x),
\end{equation*}%
where%
\begin{equation}
g(x)=\alpha \left( \upsilon (0)\sin 2\theta +\int_{0}^{x}\upsilon
^{2}(t)d_{\alpha }t-L\left( x\right) \right)
\end{equation}%
Therefore, proof of the following theorem is clear. Let $\mu (\pi )=0$

\begin{theorem}
The given dense subset of nodal points $X$ uniquely determines the
coefficients $\theta $ and $\beta $ of the boundary conditions and if $L(x)$
is known, $X$ also uniquely determines the potential $\Omega (x)$ a.e. on $%
\left( 0,\pi \right) $ . Moreover, $\Omega (x),$ $L(x),$ $\theta $ and $%
\beta $ can be reconstructed by the following formulae:

\textbf{Step-1:} For each fixed $x\in (0,\pi )$ and $\alpha \in (0,\pi ],$
choose a sequence $\left( x_{n}^{j(n)}\right) \subset X$ such that $\underset%
{\left\vert n\right\vert \rightarrow \infty }{\lim }x_{n}^{j(n)}=x;$

\textbf{Step-2: }Find the function $f(x)$ from (12) and calculate 
\begin{eqnarray}
\theta &=&f(0)  \notag \\
\beta &=&f(\pi )\theta ^{1-\alpha }  \notag \\
D_{x}^{\alpha }\mu (x) &=&D_{x}^{\alpha }f(x)+\dfrac{\alpha }{\pi }\left[
f(0)-f(\pi )f(0)^{1-\alpha }\right]
\end{eqnarray}%
\textbf{Step-3: }Find the function $g(x)$ from (13) and calculate 
\begin{equation}
\medskip \upsilon (x)=\dfrac{1}{\sqrt{\alpha }}\sqrt{D_{x}^{\alpha
}(g(x)+\alpha L(x))}
\end{equation}%
\textbf{Step-4: } If $L(x)$ is known then from (14) and (15) calculate%
\begin{eqnarray*}
p(x) &=&\upsilon (x)+D_{x}^{\alpha }\mu (x) \\
r(x) &=&D_{x}^{\alpha }\mu (x)-\upsilon (x)
\end{eqnarray*}%
If $p(x)$ and $r(x)$ are known then from (13) calculate%
\begin{equation*}
L\left( x\right) =\upsilon (0)\sin 2\theta +\int_{0}^{x}\upsilon
^{2}(t)d_{\alpha }t-\frac{g(x)}{\alpha }
\end{equation*}
\end{theorem}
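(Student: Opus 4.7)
The plan is to take the asymptotic expansion (11) from Lemma 2 as the starting point and exploit it at two different orders of $1/n$ to isolate the two auxiliary functions $f(x)$ and $g(x)$ defined in (12) and (13). Since $X$ is dense in $(0,\pi)$, for each $x\in(0,\pi)$ we can pick $(x_n^{j(n)})\subset X$ converging to $x$; substituting into (11) and keeping only the $O(1/n)$ correction yields the first limit in the statement, while subtracting that correction and rescaling by $2n^2$ produces the second limit. Because the error term in (11) is $o(1/n^2)$, both limits exist uniformly on compact subsets of $(0,\pi)$, so $f$ and $g$ are well-defined on the whole interval and depend only on the set $X$.

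Once $f$ is in hand I would recover $\theta$ and $\beta$ by evaluating (12) at the endpoints. Since $\mu(0)=0$ and (by assumption) $\mu(\pi)=0$, formula (12) collapses at $x=0$ and $x=\pi$ to two linear relations in $\theta$ and $\beta$, giving the identities in Step-2; the normalization factors $\pi^{1-\alpha}$ and $\theta^{1-\alpha}$ come out of writing $x^\alpha/\pi = \pi^{\alpha-1}$ at $x=\pi$. Applying the conformable derivative $D_x^\alpha$ to (12), which commutes with $\mu(x)$ via Lemma 1, then produces $D_x^\alpha\mu(x)$ as a linear combination of $D_x^\alpha f(x)$ and the constants $f(0),f(\pi)$, which is exactly the third line of (14).

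For Step-3 I would differentiate (13) in the conformable sense. By Lemma 1, $D_x^\alpha\!\int_0^x \upsilon^2(t)\,d_\alpha t = \upsilon^2(x)$, and the constant $\upsilon(0)\sin 2\theta$ disappears under $D_x^\alpha$, so $D_x^\alpha g(x) = \alpha\upsilon^2(x) - \alpha D_x^\alpha L(x)$. Rearranging gives $\upsilon^2(x) = \tfrac{1}{\alpha}\bigl(D_x^\alpha g(x)+\alpha D_x^\alpha L(x)\bigr) = \tfrac{1}{\alpha}D_x^\alpha(g(x)+\alpha L(x))$, and a square root produces (15). Step-4 is then purely algebraic: given $\upsilon$ and $D_x^\alpha\mu$, inverting the $2\times 2$ linear system $\upsilon=(p-r)/2$, $D_x^\alpha\mu=(p+r)/2$ recovers $p$ and $r$ pointwise; alternatively, if $p,r$ are given instead of $L$, formula (13) can be solved algebraically for $L(x)$.

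I expect the main obstacle to be two subtleties rather than hard estimates. First, the sign of the square root in (15) is not determined by $X$ alone, since $\upsilon^2$ is what appears in (13); one needs either a continuity/normalization hypothesis on $\upsilon$ or additional a priori information to fix the branch. Second, one should verify carefully that the identities in Step-2 really use only $f(0)$, $f(\pi)$ and the assumption $\mu(\pi)=0$, since the boundary value $f(0)$ enters the formula for $\beta$ through $\theta^{1-\alpha}$, creating an implicit dependence that must be tracked. Everything else reduces to differentiating (12)–(13) using the conformable chain rule and substituting the already reconstructed quantities in the order $\theta\to\beta\to D_x^\alpha\mu\to\upsilon\to(p,r)$.
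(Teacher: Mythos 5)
Your overall route is the same as the paper's: the paper offers no written proof at all (it declares the theorem ``clear'' after establishing the limits (12)--(13)), and your plan --- extract $f$ and $g$ from the nodal expansion (11) at orders $1/n$ and $1/n^{2}$, evaluate $f$ at the endpoints to get $\theta,\beta$, apply $D_x^{\alpha}$ to (12) and (13) to get $D_x^{\alpha}\mu$ and $\upsilon^{2}$, then solve the $2\times2$ system for $p,r$ --- is exactly the intended argument. Your treatment of Steps 3 and 4 is sound, and your observation that $X$ only determines $\upsilon^{2}$, so the branch of the square root in (15) is not fixed by nodal data alone, is a real point the paper silently ignores.

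The genuine gap is in Step-2, precisely at the place you flag but defer (``one should verify carefully\ldots''). Carrying out the endpoint evaluation of (12) with $\mu(0)=\mu(\pi)=0$ gives
\begin{equation*}
f(0)=\frac{\theta}{\pi^{1-\alpha}}=\theta\,\pi^{\alpha-1},\qquad
f(\pi)=\theta\,\pi^{\alpha-1}-\pi^{\alpha-1}(\theta-\beta)=\beta\,\pi^{\alpha-1},
\end{equation*}
so what the data actually determine are $\theta=\pi^{1-\alpha}f(0)$ and $\beta=\pi^{1-\alpha}f(\pi)$. The printed identities $\theta=f(0)$ and $\beta=f(\pi)\,\theta^{1-\alpha}$ do \emph{not} follow from (12) unless $\alpha=1$; in particular the factor $\theta^{1-\alpha}$ cannot arise as a ``normalization factor'' from writing $x^{\alpha}/\pi=\pi^{\alpha-1}$ at $x=\pi$, as you suggest --- evaluating (12) at $x=\pi$ only ever produces powers of $\pi$, never of $\theta$. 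The same bookkeeping affects the third line of (14): since $D_x^{\alpha}(x^{\alpha})=\alpha$, one gets $D_x^{\alpha}f(x)=\pi^{\alpha-1}D_x^{\alpha}\mu(x)-\tfrac{\alpha}{\pi}(\theta-\beta)$, hence $D_x^{\alpha}\mu(x)=\pi^{1-\alpha}\bigl[D_x^{\alpha}f(x)+\tfrac{\alpha}{\pi}(\theta-\beta)\bigr]$, which differs from the stated formula by the prefactor $\pi^{1-\alpha}$. So the reconstruction is still possible from $f$ (uniqueness survives), but the explicit formulas you claim to derive in Step-2 are not the ones your computation yields; you need either to correct them or to restrict to $\alpha=1$.
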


\end{document}